\newtheorem{theorem}{Theorem}
\newtheorem{remark}{Remark}
\begin{document}
%\begin{CJK*}{GBK}{song}
\title{{Symbol Detection of Ambient Backscatter Systems with Manchester Coding}}
\author{Qin Tao, Caijun Zhong, Hai Lin, and Zhaoyang Zhang
\thanks{Qin Tao, Caijun Zhong and Zhaoyang Zhang are with the Institute of Information and Communication Engineering, Zhejiang University, China, and the Zhejiang Provincial Key Laboratory of Information Processing, Communication and Networking, (email:caijunzhong@zju.edu.cn).}
\thanks{H. Lin is with the Department of Electrical and Information Systems, Osaka Prefecture University, Osaka 599-8531, Japan (email: lin@eis.osakafu-u.ac.jp).}
}

%\markboth{IEEE TRANSACTIONS ON Communications,~Vol.~x, No.~xx,~xx~200x} {IEEE TRANSACTIONS ON COMMUNICATIONS,~Vol.~x,
%No.~xx,~xx~200x}

\maketitle
%\CJKindent

\begin{abstract}
Ambient backscatter communication is a newly emerged paradigm, which utilizes the ambient radio frequency (RF) signal as the carrier to reduce the system battery requirement, and is regarded as a promising solution for enabling large scale deployment of future Internet of Things (IoT) networks. The key issue of ambient backscatter communication systems is how to perform reliable detection. In this paper, we propose novel encoding methods at the information tag, and devise the corresponding symbol detection methods at the reader. In particular, Manchester coding and differential Manchester coding are adopted at the information tag, and the corresponding semi-coherent Manchester (SeCoMC) and non-coherent Manchester (NoCoMC) detectors are developed. In addition, analytical bit error rate (BER) expressions are characterized for both detectors assuming either complex Gaussian or unknown deterministic ambient signal. Simulation results show that the BER performance of unknown deterministic ambient signal is better, and the SeCoMC detector outperforms the NoCoMC detector. Finally, compared with the prior detectors for ambient backscatter communications, the proposed detectors have the advantages of achieving superior BER performance with lower communication delay.
\end{abstract}

\begin{keywords}
IoT, ambient backscatter, symbol detection, Manchester coding, BER
\end{keywords}

\section{Introduction}
Internet of Things (IoT) is one of the fastest growing sectors in the wireless industry, and is gaining considerable interests from both the industry and academia \cite{L.Atzori}. A distinctive feature of IoT networks is the huge number of devices to be connected, which poses significant challenges for the practical deployment of IoT networks. For instance, due to the sheer volume of the devices, individual device should be extremely low cost. Routine maintenance procedures such as battery replacement incur overwhelming overheads. Therefore, how to address these issues is of critical importance.

One promising technology to tackle the above challenges is the backscatter communication, where the backscatter device reflects rather than generates the radio frequency (RF) signal for information transmission, so it can be made battery-free and inexpensive \cite{K.Han}. The backscatter communication has already been adopted in several commercial systems, and among which, the most notable one is the radio frequency identification (RFID) system \cite{C.He2,C.He4}. However, the communication range of RFID system is very limited due to the ``power-up link'' \cite{DMDobkin}, making it inapplicable for IoT systems. Responding to this, the work \cite{30} proposed a bistatic scatter radio which detaches the carrier emitter from the reader. By doing so, long range communication between the tag and reader can be achieved.

The bistatic scatter radio requires a dedicated carrier emitter, which may not be available or difficult to deploy in certain environments. Motivated by this, the authors in \cite{3} proposed the concept of ambient backscatter system, which utilizes the ambient RF signals from surrounding environments such as TV and cellular to establish reliable communication between the tag and reader. Later, the novel ambient WiFi backscatter was developed in \cite{Kellogg2015Wifi,BackFi,wifi}, which bridges backscatter devices with the Internet through commercial receiver. The idea was so intriguing that substantial interests have been drawn from the academia.
%A multi-antenna cancellation method was proposed in \cite{18} to increase the communication range and rate of the ambient backscatter devices.
%In \cite{YLiu}, a new coding scheme for tags with three states had been proposed to improve the throughput of the ambient backscatter communication system.
To improve the throughput of the ambient backscatter communication system, a multi-antenna cancellation and a three states coding scheme have been proposed in \cite{18} and \cite{YLiu}, respectively.
Further, the performance of the ambient backscatter in legacy systems was analyzed in \cite{DDarsena}, where it was shown that the backscatter transmission can even improve the performance of legacy system in certain case.
More recently, the work \cite{HoangDT} showed that the network performance can be improved taking advantage of the ambient backscatter communications.

Parallel with the pursuit of high throughput ambient backscatter communication systems, significant efforts have been devoted to seek efficient and reliable symbol detection methods. {Unlike conventional RFID systems, the received signal at the reader is corrupted by the unknown and modulated ambient signals, which makes reliable detection a challenging problem.} In a recent work \cite{1}, the authors proposed a semi-coherent energy detector and analytically characterized the achievable bit error rate (BER) performance. Later in \cite{2,5}, differential coding based non-coherent detectors were proposed.

There are several common features of the proposed detectors in \cite{1,2,5}. The first is that a decision threshold needs to be estimated, which consumes precious time and energy resources. The second is that the symbol decoding starts after the completion of the estimation process, which results in communication delay. The third is that all the detectors assume that the information bits ``0'' and ``1'' are equally probable. However, in practice, the distribution of the information bits are unknown and may fluctuate over the time, which limits the applicability of the proposed detectors.

Motivated by the above key observations, in this paper, we develop {new coding schemes for ambient backscatter communication systems} in an effort to circumvent the above issues, and devise the corresponding detection methods.
In particular, we propose to use Manchester code and differential Manchester code to encode the original information bits at the tag. With the proposed coding scheme, each information bit corresponds to a level transition. Then, {semi-coherent Manchester (SeCoMC) and non-coherent Manchester (NoCoMC)} detectors are devised, which eliminate the requirement of estimating decision threshold, and enable immediate symbol-by-symbol detection. Moreover, for both detectors, the achievable BER performance is derived in closed-form {for both the complex Gaussian signal and deterministic signal.} The outcomes of the work indicate that it is desirable to use deterministic ambient signal in terms of BER performance. Also, the proposed Manchester coding framework yields better BER performance compared with prior works \cite{1,2,5}, especially when the original information bits are unequally distributed.

The remainder of the paper is organized as follows. Section II describes in detail the system model, while Section III presents the SeCoMC detector. Section IV deals with the NoCoMC detector. Numerical results and discussions are presented in Section V. Finally, Section VI concludes the paper and summarizes the key outcomes of this paper.

{\it Notations:} Scalars are lowercase letters, while vectors and matrices are boldfaced letters. We use $h^*$ and $|h|$ to denote the conjugate and absolute value of complex number $h$, respectively. Also, $CN(\mu,\sigma^2)$ denotes the complex Gaussian distribution with mean $\mu$ and variance $\sigma^2$, $\chi _{\nu}^2$ denotes the central chi-squared distribution with $\nu$ degrees of freedom (DOF), $\chi_{\nu}^{'2}(\lambda)$ denotes the non-central chi-squared distribution with $\nu$ DOF and noncentral parameter $\lambda$. The Hermitian and determinant of matrix \textbf{A} are denoted by \textbf{A}$^H$, and det(\textbf{A}), respectively. Also, \textbf{I}$_N$ denotes the identity matrix of size $N$, and $||\textbf{y}||$ denotes the Euclidean norm of vector \textbf{y}.

\section{System Model}
We consider the elementary ambient backscatter system as in \cite{1,2,5}, which consists of three nodes, namely the ambient RF source S, the reader R, and the tag T, as {shown} in Fig.\ref{fig:fig111}. We assume the frequency flat block fading scenario, such that all channels {remain unchanged} within each coherence interval, but vary independently in different coherence intervals.
\begin{figure}[htbp]
\centering
\includegraphics[width=2.5in]{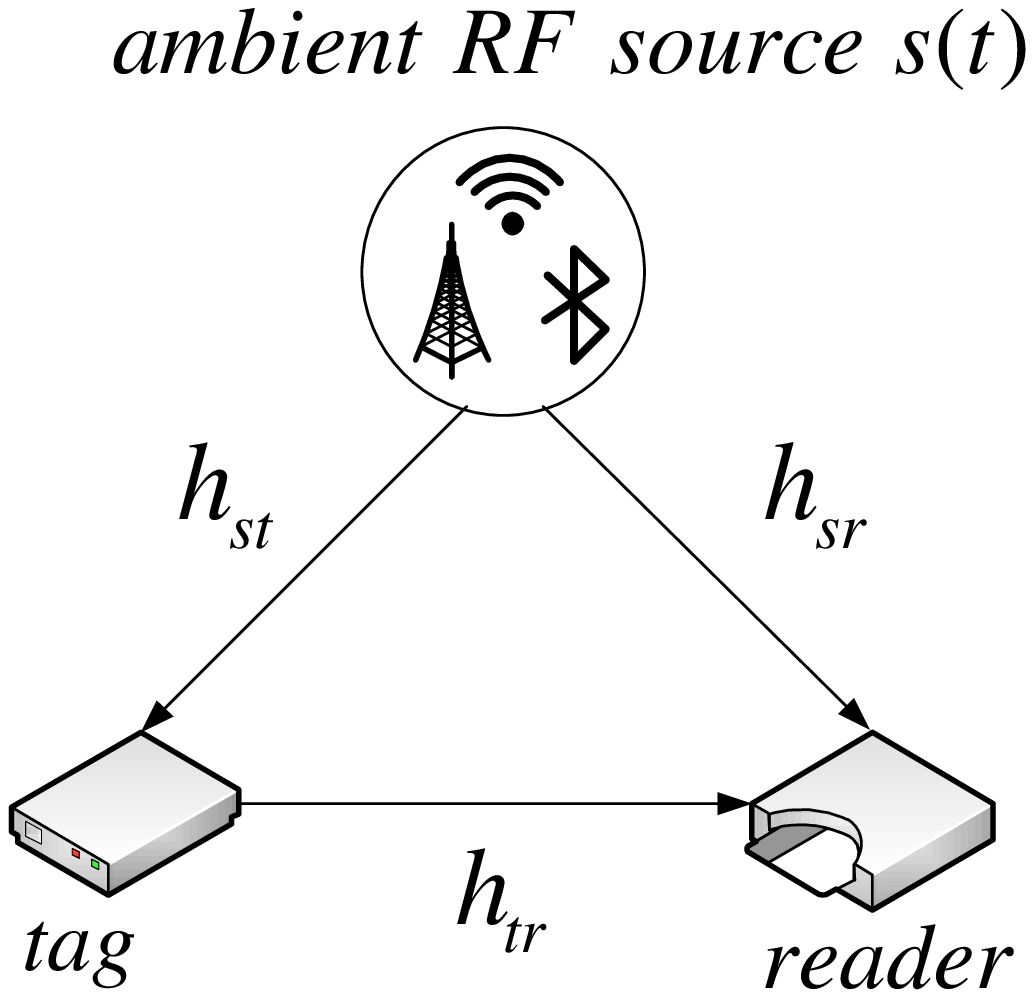}
\caption{Three-node ambient backscatter system model}
\label{fig:fig111}
\end{figure}
\subsection{Information transmission}
In ambient backscatter systems, the transmission of the binary digit $d$ of the tag to the reader is accomplished by the choice of whether to backscatter the incident ambient signal. Specifically, the digits ``1'' and ``0'' are associated with the backscattering and non-backscattering state, respectively. Since the tag {transmits} at a lower rate than the ambient RF signal, the binary digit $d$ remains unchanged for some consecutive $s(t)$. Mathematically, the backscattered signal $s_b(t)$ can be expressed as
\begin{align}
s_b(t)=\eta  h_{st}d s(t),
\end{align}
where $\eta$ is the reflection coefficient of the tag, $h_{st}$ denotes the channel coefficient between the ambient RF source and tag, $d$ is the binary digit transmitted by the tag, which will be elaborated in section \ref{sc}, $s(t)$ is the RF signal from the ambient RF source {and will be elaborated in section \ref{sb}}.

Since the reader can overhear the signals from both the ambient RF source and tag, the received signals at the reader can be expressed as $y(t)$ \footnote{Strictly speaking, the signal received by the reader between the tag and ambient RF source may exist a time delay. However, such delay is negligible, since the tag and the reader are relatively close \cite{3,1,2}.}
\begin{align} \label{g2}
y(t) &= {h_{sr}}s(t) + h_{tr}s_b(t) + w(t)\nonumber\\
&=\left[{h_{sr}} + \eta {h_{tr}}{h_{st}}d\right]s(t) + w(t),
\end{align}
where  $h_{sr}$ and $h_{tr}$ denote the channel coefficients of the ambient RF source to reader and tag to reader channels, respectively. Also, $w(t)$ is the zero-mean addictive white Gaussian noise (AWGN) with variance $N_w$, i.e., $w(t)\sim CN(0,N_w)$.

Sampling each $d$ interval at the signal Nyquist rate with $N$ such that the adjacent samples are uncorrelated, and denoting the discrete sample vector at the reader as ${\bf{y}} = \{y[1], \cdots y[n],\cdots ,y[N]\}$, then (\ref{g2}) can be reformulated as
\begin{align} \label{g3}
y[n]{\rm{ = }}\left\{ {\begin{array}{*{20}{c}}
{{h_0}s[n] + w[n],\qquad d=0},\\
{{h_1}s[n] + w[n],\qquad d=1},
\end{array}} \right.
\end{align}
where ${h_0} \triangleq {h_{sr}},{h_1} \triangleq {h_{sr}} + \eta {h_{tr}}{h_{st}}$.

\subsection{Manchester coding and differential Manchester coding} \label{sc}

Instead of transmitting the original information bits, we propose to adopt Manchester coding at the tag, in an effort to overcome the implementation issues of the detection schemes proposed in \cite{1,2,5}. To make the paper self-contained, we now provide a brief introduction of the basic idea of Manchester coding.

\subsubsection{Manchester coding}
The Manchester code is a very simple block code that maps ``0'' and ``1'' into ``01'' or ``10'', which has been widely used in passive RFID \cite{44,45,46,47,43,MSimon}. In this paper, we adopt the IEEE 802.3 standard convention for Manchester coding, where the original binary symbol ``0'' is represented by ``10'' and ``1'' is represented by ``01'', as depicted in {Fig. \ref{fig:fig3}, where ${\bar{d}}_{k}^a$ and $\bar{d}_{k}^b$} denote the first and second half of the Manchester code associated with the $k$-th ($k\in \mathbb{N}^+$) original binary symbol $d_k$, respectively.

\subsubsection{Differential Manchester coding}
The differential Manchester coding is a modification of Manchester coding, which is coded by the following rule: each bit is represented by the presence or absence of a change compared with the previous bit, i.e., no change denotes ``0'' while change denotes ``1'', as depicted in Fig. \ref{fig:fig3}, where {$\hat{d}_{k}^a$ and $\hat{d}_{k}^b$} denote the first and second half of the differential Manchester code associated with the $k$-th original binary symbol $d_k$, respectively.

\begin{figure}[htbp]
\centering
\includegraphics[width=3.5in]{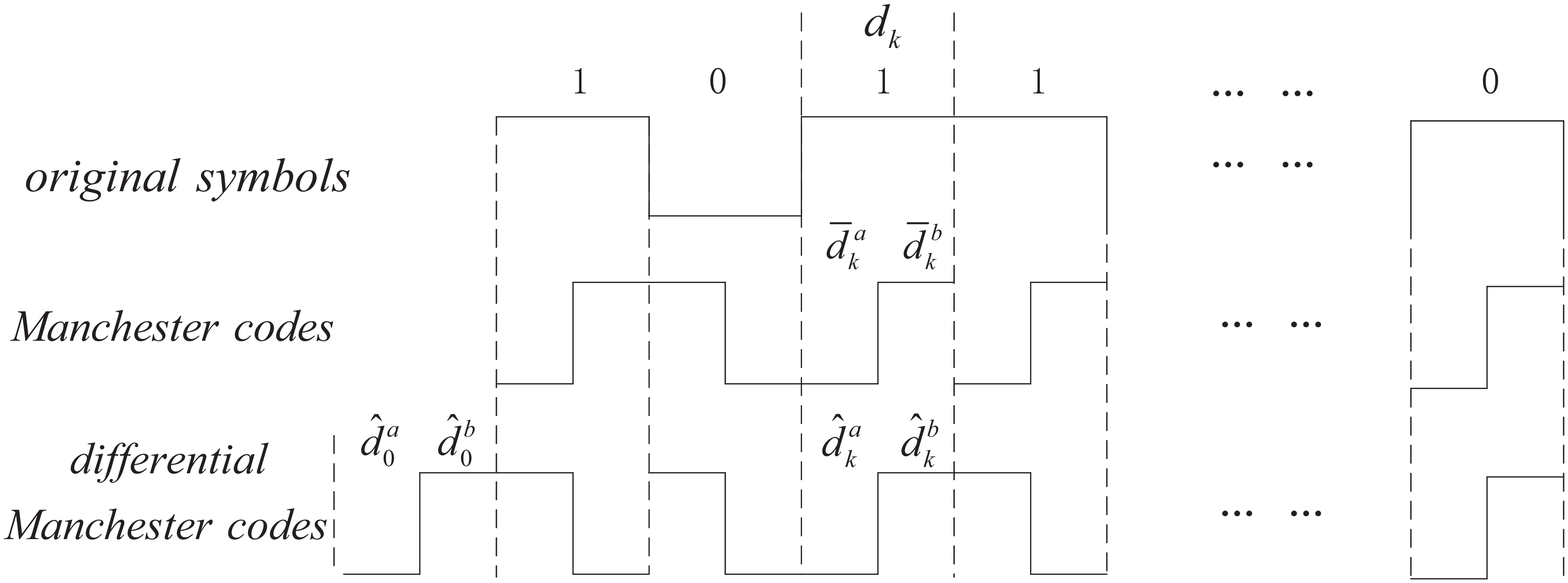}
\caption{Manchester and differential Manchester coding}
\label{fig:fig3}
\end{figure}

\begin{remark}\label{r3}
Sampling each $\bar{d}$ or $\hat{d}$ interval with rate N is equivalent to sample the original symbol with rate $2N$. \footnote{{We use $\bar{d}$ and $\hat{d}$ to denote an arbitrary symbol after Manchester coding and differential Manchester coding, respectively. This same convention applies to the notations defined in the ensuing sections such as ${\bf{\bar{y}}}$, $\bar{Z}$, ${\bf{\hat{y}}}$ and $\bar{Z}$.}}
\end{remark}
\subsection{Ambient RF Signals} \label{sb}

Depending on the communication environment, the ambient RF signals may come from a variety of ambient RF sources, such as TV, Radio, cellular network, Wi-Fi and Bluetooth transmissions. Hence, the ambient RF signals may take different forms. In the paper, we consider two typical ambient RF signals.

\subsubsection{The complex Gaussian ambient signal}
In a complex communication environment, the ambient RF signal can be the combination of many random signals. Invoking the central limit theorem, it is reasonable to model it as a zero-mean circular symmetric complex Gaussian random variable, i.e.,
\begin{align}
s[n]\sim CN(0,P_s),
\end{align}
where the $P_s$ is average power.
\subsubsection{The unknown deterministic ambient signal}
In practice, the ambient RF signal always have specific modulation mode, like $FSK$, $MSK$, $GMSK$, $PSK$, $QAM$, $OFDM$. Thus, we also analyze the detection performance in case of the unknown deterministic ambient signal.

\section{Semi-coherent Manchester detector}
In this section, we propose a SeCoMC detector and present a detailed analysis of the achievable BER performance for both complex Gaussian and deterministic RF signals. We start with the complex Gaussian signal.

\subsection{Complex Gaussian Signal}
Let $\bar H_{0}$ and $\bar H_{1}$ be the hypotheses associated with $\bar d$=0 and $\bar d$=1, respectively. When $s[n]$ and $w[n]$ are zero-mean circular symmetric complex Gaussian random variables, it can be easily observed that the received signal vector ${\bf{\bar y}}$ is a complex Gaussian vector with
\begin{align}
{\bf{\bar y}}{\rm{ \sim }}\left\{ {\begin{array}{*{20}{c}}
{CN(0,\sigma _{0}^2\textbf{I}_N),\qquad \bar H_{0}},\\
{CN(0,\sigma _{1}^2\textbf{I}_N),\qquad \bar H_{1}},
\end{array}} \right.
\end{align}
where
\begin{align}
\sigma _{0}^2 = |{h_0}{|^2}{P_s} + {N_w},\qquad \sigma _{1}^2 = |{h_1}{|^2}{P_s} + {N_w}.
\end{align}
Hence, the probability density function (PDF) of ${\bf{\bar y}}$ under hypothesis $\bar H_{i}$, {where $i \in \{0,1\}$}, is given by \cite{35,36}
\begin{align} \label{g01}
\text{Pr}({{\bf{\bar y}}}|{\bar H_{i}}) &= \frac{1}{{{{(2\pi )}^N}\det ({\sum _i})}}{e^{ - \frac{{{{{\bf{\bar y}}}}{\sum _i}^{ - 1}{{\bf{\bar y}}}^H}}{2}}}\nonumber\\
&= \frac{1}{{{{(2\pi )}^N}{{(\frac{1}{2}\sigma _{i}^2)}^N}}}{e^{ - \frac{{||{{{\bf{\bar y}}}}|{|^2}}}{{\sigma _{i}^2}}}},
\end{align}
where ${\sum _i} = \frac{1}{2}E[{{\bf{\bar y}}}^H{{{\bf{\bar y}}}}] = \frac{1}{2}{\sigma _{i}^2}\bf{I}_N$.

To detect the $k$-th original symbol $d_k$, we adopt the maximum likelihood (ML) principle as in \cite{1}. Let $H_0$ denote the hypothesis of $d_k=0$ and $H_1$ denote the hypothesis of $d_k=1$, with probability $q$ and $1-q$, respectively. With Manchester coding, the received signal during the $k$-th symbol interval can be expressed as
\begin{align}
{{\bf{y}}_k} = {{\bf{\bar y}}_{k}^a}{{\bf{\bar y}}_{k}^b}, \quad ab \in \{01,10\}.
\end{align}
Since ${{\bf{\bar y}}_{k}^a}$ and ${{\bf{\bar y}}_{k}^b}$ are independent, the ML ratio test can be obtained as
\begin{align}\label{g8}
L({\bf{y}}_k) =  \frac{{p({{\bf{\bar y}}_{{{k}}}^a}{{\bf{\bar y}}_{{{k}}}^b}|{ H_1})}}{{p({{\bf{\bar y}}_{{{k}}}^a}{{\bf{\bar y}}_{{{k}}}^b}|{ H_0})}}
=\frac{{p({{\bf{\bar y}}_{{{k}}}^a}|{\bar H_{0}})p({{\bf{\bar y}}_{{{k}}}^b}|{\bar H_{1}})}}{{p({{\bf{\bar y}}_{{{k}}}^a}|{\bar H_{1})p({{\bf{\bar y}}_{{{k}}}^b}|{\bar H_{0}})}}}
{\begin{array}{*{20}{c}}
{{H_1}}\\
 \gtrless\\
{{H_0}}
\end{array}}
1.
\end{align}
Now, {denoting $\bar Z_{k}^{j}=||{{\bf{\bar y}}_{k}^{j}}|{|^2}$ as the received signal energy, where $j\in \{a, b\}$}, and substituting (\ref{g01}) into (\ref{g8}), after some simple algebraic manipulations, we have
\begin{align}
\frac{1}{{\sigma _{0}^2}}(\bar Z_{k}^a - \bar Z_{k}^b)
 {\begin{array}{*{20}{c}}
{{H_0}}\\
 \gtrless\\
{{H_1}}
\end{array}}
 \frac{1}{{\sigma _{1}^2}}(\bar Z_{k}^a - \bar Z_{k}^b).
\end{align}

Therefore, the decision rule of the proposed SeCoMC detector is given by
\begin{align}
\left\{ {\begin{array}{*{20}{c}}
{\bar Z_{k}^a\begin{array}{*{20}{c}}
{\begin{array}{*{20}{c}}
{{H_0}}\\
 \gtrless\\
{{H_1}}
\end{array}}
\end{array}\bar Z_{k}^b,\quad \sigma _{0}^2< \sigma _{1}^2}.\\
{\bar Z_{k}^a\begin{array}{*{20}{c}}
{\begin{array}{*{20}{c}}
{{H_0}}\\
 \lessgtr\\
{{H_1}}
\end{array}}
\end{array}\bar Z_{k}^b, \quad \sigma _{0}^2 > \sigma _{1}^2}.
\end{array}} \right.
\end{align}
{As can be readily observed, to recover the original symbol, one can simply compare the energy level of two adjacent Manchester coded symbols.{\footnote{{If $\sigma_0^2=\sigma_1^2$, the detector fails. However, such case is considered unlikely \cite{1}.}}}} {Please note, the above decision rule is significantly different from the traditional RFID with Manchester coding and the rule proposed in \cite{1}, which compares the energy of the symbol interval with some predetermined decision threshold.} Since the proposed scheme does not require the exact evaluation of the energy level as well as the decision threshold, it is much more energy efficient, and incurs less delay.

Also, to make the decision, the relationship of $\sigma_0^2$ and $\sigma_1^2$ is required. It is worth noting that, because the channel coefficients are complex, $\sigma_1^2$ is not guaranteed to be greater than $\sigma_0^2$. Since the relationship is unknown a priori, it needs to be estimated in each coherent block. In practice, this can be achieved via training. For instance, at the beginning of each {coherence interval of length $K$ symbols}, a successive number of $T$ symbols ``1'' are used to evaluate the relationship. Now define $A_t$ and $B_t$ as
\begin{align}
A_t = \frac{{\sum\limits_{t = 1}^T{{{{\bar Z}_{t}^a}}}}}{T\cdot N}\quad {\mbox{and}} \quad B_t = \frac{{\sum\limits_{t = 1}^T{{{{\bar Z}_{t}^b}}}}}{T\cdot N}.
\end{align}
Then, we can use $A_t$ and $B_t$ to approximate $\sigma _{0}^2$ and $\sigma _{1}^2$, respectively. If $A_t>B_t$, then $\sigma _{0}^2>\sigma _{1}^2$, else $\sigma _{0}^2<\sigma _{1}^2$.

We summarize the algorithm of SeCoMC detector as follows:
\begin{algorithm}[h]
\caption{SeCoMC detector}
\begin{algorithmic} [1]
\Require The received signal vectors: [$\underbrace {{{\bf{y}}_1}; \cdots {{\bf{y}}_T}}_{d = 1}; {\bf{y}}_1$; $\cdots$ ${\bf{y}}_k$; $\cdots$ ${\bf{y}}_K$]
\Ensure The detected symbols: [$d_1$, $\cdots$, $d_k$ $\cdots$, $d_K$]
\State Training phase: Evaluate the relationship of $A_t$ and $B_t$
\State For $k$ from $1$ to $K$\\
    \quad compute $\bar Z_{k}^a$=$||{\bf{\bar y}}_{k}^a||^2$, $\bar Z_{k}^b$=$||{\bf{\bar y}}_{k}^b||^2$\\
     \quad if $A_t > B_t$\\
     \quad\quad if $\bar Z_{k}^a>\bar Z_{k}^b$, then let $d_k=1$, else $d_k=0$, end if\\
     \quad else if $\bar Z_{k}^a\leq\bar Z_{k}^b$, then let $d_k=1$, else $d_k=0$, end if\\
     \quad end if \\
     end for
\State Return [$d_1$, $\cdots$, $d_k$ $\cdots$, $d_K$]
\end{algorithmic}
\end{algorithm}

We now present a detailed performance analysis on the achievable BER of the proposed SeCoMC detector, and we have the following important result:
\begin{theorem} \label{t0}
The BER of SeCoMC detector with complex Gaussian signal is given by
\begin{align}
{P_{se}^{CG}}%&=p(Z_{m,k}^0< Z_{m,k}^1\label{g116})\\
&={\frac{\Gamma(2N)\sigma _{n}^{2N}}{N\Gamma^2(N)\sigma _{m}^{2N}}} \cdot {_2F_1}\left(N,2N;N+1;-\frac{\sigma _{n}^{2}}{\sigma _{m}^{2}}\right),
\end{align}
where $\sigma _{n}^{2}=\min\{\sigma _{0}^{2},\sigma _{1}^{2}\}$, $\sigma _{m}^{2}=\max\{\sigma _{0}^{2},\sigma _{1}^{2}\}$, $\Gamma(x)$ denotes the gamma function and ${_2F_1}(a, b ; c ; -x)$ denotes the Gauss hypergeometric function \cite{Table}.

\begin{proof}
See Appendix \ref{ap2}.
\end{proof}
%{_2F_1}(a, b; c; -x) 表达式在1035,1005页
\end{theorem}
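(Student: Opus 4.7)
The strategy is to identify $\bar Z_k^a$ and $\bar Z_k^b$ as two independent Gamma-distributed random variables and then reduce the BER to a standard Euler-type integral that can be recognized as a Gauss hypergeometric function.

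First I would record the distributions of the test statistics. Since each half-symbol is transmitted over a ``channel'' of variance $\sigma_0^2$ or $\sigma_1^2$ depending on whether that Manchester chip carries a $0$ or a $1$, and since $\bar{\mathbf y}_k^j \sim CN(0,\sigma_i^2 \mathbf{I}_N)$, each energy $\bar Z_k^j = \|\bar{\mathbf y}_k^j\|^2$ is a scaled central chi-squared with $2N$ degrees of freedom, equivalently a Gamma random variable with shape $N$ and scale $\sigma_i^2$. Under $H_0$ (code ``$10$'') the pair $(\bar Z_k^a,\bar Z_k^b)$ has scale parameters $(\sigma_1^2,\sigma_0^2)$, and under $H_1$ (code ``$01$'') the roles are exchanged. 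The energy-comparison rule is symmetric in the two halves, so the conditional error probabilities under $H_0$ and $H_1$ coincide, and I can fix $H_0$ and (without loss of generality) assume $\sigma_0^2<\sigma_1^2$, so that $\sigma_n^2=\sigma_0^2$ and $\sigma_m^2=\sigma_1^2$. The error event then becomes $\{\bar Z_k^a \le \bar Z_k^b\}$, i.e., the high-variance energy falls below the low-variance one.

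The second step is to write this error probability as the double integral $\int_0^\infty\!\int_0^v f_U(u)f_V(v)\,du\,dv$, with $U\sim\mathrm{Gamma}(N,\sigma_m^2)$ and $V\sim\mathrm{Gamma}(N,\sigma_n^2)$, and apply the change of variables $u=vt$, $t\in(0,1)$. The inner integral over $v$ is a standard Gamma moment, $\int_0^\infty v^{2N-1}e^{-v(t/\sigma_m^2+1/\sigma_n^2)}\,dv=\Gamma(2N)\,(t/\sigma_m^2+1/\sigma_n^2)^{-2N}$, which, after pulling out the $\sigma_n^{2N}/\sigma_m^{2N}$ prefactor, leaves a single integral proportional to $\int_0^1 t^{N-1}\bigl(1+(\sigma_n^2/\sigma_m^2)\,t\bigr)^{-2N}\,dt$.

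The final and decisive step is to recognize this last integral via the Euler representation $\int_0^1 t^{b-1}(1-t)^{c-b-1}(1-zt)^{-a}\,dt=\frac{\Gamma(b)\Gamma(c-b)}{\Gamma(c)}\,{}_2F_1(a,b;c;z)$ with the choice $b=N$, $c=N+1$ (so that $(1-t)^{c-b-1}=1$), $a=2N$, and $z=-\sigma_n^2/\sigma_m^2$; this produces exactly the factor $\frac{1}{N}\,{}_2F_1(2N,N;N+1;-\sigma_n^2/\sigma_m^2)$, which after invoking the symmetry ${}_2F_1(a,b;c;z)={}_2F_1(b,a;c;z)$ matches the claimed expression. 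I do not expect a substantive obstacle; the only delicate point is the bookkeeping, namely matching the Euler parameters so that the missing $(1-t)$ factor forces $c=N+1$, and keeping $\sigma_n^2$ and $\sigma_m^2$ attached to the correct half-symbol so that the prefactor $\Gamma(2N)\sigma_n^{2N}/(N\Gamma^2(N)\sigma_m^{2N})$ emerges with the right exponents.
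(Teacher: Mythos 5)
Your proposal is correct and follows essentially the same route as the paper: both reduce the BER to the probability that one of two independent $\mathrm{Gamma}(N,\sigma_i^2)$ (scaled central chi-square) half-symbol energies falls below the other, note that this makes the result independent of $q$, and arrive at the stated ${}_2F_1$ expression with $\sigma_n^2=\min\{\sigma_0^2,\sigma_1^2\}$, $\sigma_m^2=\max\{\sigma_0^2,\sigma_1^2\}$. The only difference is in the last step: the paper invokes the known CDF of the F-distribution for the ratio of independent chi-squares, whereas you rederive that CDF by hand via the substitution $u=vt$ and Euler's integral representation of the hypergeometric function, which is a slightly more self-contained version of the same computation.
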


In contrast to the detector proposed in \cite{1}, which depends heavily on the distribution of $H_1$ and $H_0$, we see that the proposed detector is independent of $q$. This is a highly desirable feature since the BER performance guarantee can be ensured for arbitrary $q$. While Theorem \ref{t1} provides an efficient means for the evaluation of the BER performance, it is nevertheless difficult to reveal the impact of the key system parameters on the system performance. Therefore, we now look into the asymptotic regime, where simple expressions can be obtained.

\begin{theorem} \label{t1}
When $N$ is large, the BER of SeCoMC detector with complex Gaussian signal can be approximated as
\begin{align}\label{gap2}
\widetilde{P}_{se}^{CG} = \frac{1}{2}{\sf erfc}\left(\frac{{\sqrt N ||{h_1}{|^2} - |{h_0}{|^2}| }}{{\sqrt 2 \sqrt {{{(|{h_0}{|^2} + \frac{1}{\gamma })}^2} + {{(|{h_1}{|^2} + \frac{1}{\gamma })}^2}} }}\right),
\end{align}
where ${\sf erfc}(x)= 1 - {\sf erf}(x)$, ${\sf erf}(x) =\frac{2}{\sqrt{\pi} }\int_0^x {{e^{ - {t^2}}}} dt$ and $\gamma \triangleq \frac{{{P_s}}}{{{N_w}}}$ denotes the signal to noise ratio (SNR).
\end{theorem}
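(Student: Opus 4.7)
The plan is to reduce the problem to a Gaussian tail estimate via the central limit theorem, exploiting the fact that each of the decision statistics $\bar Z_k^a$ and $\bar Z_k^b$ is a sum of $N$ i.i.d.\ terms.

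First I would fix the error event. Since the decision rule does not depend on the prior $q$ (as emphasized in the discussion after Theorem~\ref{t0}), the conditional error probabilities under $H_0$ and $H_1$ are equal by the symmetric roles of $\bar d^a$ and $\bar d^b$ in the two Manchester codewords, so it suffices to compute $\Pr(\text{error}\mid H_1)$. Assume without loss of generality that $\sigma_0^2<\sigma_1^2$ (the opposite case is symmetric and will be absorbed into an absolute value at the end). Under $H_1$ the codeword is $(\bar H_0,\bar H_1)$, so the detector errs exactly when $\bar Z_k^a>\bar Z_k^b$.

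Next I would invoke the CLT. Under $\bar H_i$, the $N$ samples of $\bar{\bf y}_k^{\,\cdot}$ are i.i.d.\ $CN(0,\sigma_i^2)$, so each $|y[n]|^2$ is exponential with mean $\sigma_i^2$ and variance $\sigma_i^4$. Hence $\bar Z_k^a$ and $\bar Z_k^b$ are independent, each a sum of $N$ i.i.d.\ exponentials, so
\begin{align}
\mathbb E[\bar Z_k^a\mid H_1]=N\sigma_0^2,\ \ \mathrm{Var}(\bar Z_k^a\mid H_1)=N\sigma_0^4,\nonumber\\
\mathbb E[\bar Z_k^b\mid H_1]=N\sigma_1^2,\ \ \mathrm{Var}(\bar Z_k^b\mid H_1)=N\sigma_1^4.\nonumber
\end{align}
For large $N$, the CLT gives that $\bar Z_k^a-\bar Z_k^b$ is approximately Gaussian with mean $N(\sigma_0^2-\sigma_1^2)$ and variance $N(\sigma_0^4+\sigma_1^4)$. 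Therefore
\begin{align}
\Pr(\bar Z_k^a>\bar Z_k^b\mid H_1)\approx Q\!\left(\frac{\sqrt N\,(\sigma_1^2-\sigma_0^2)}{\sqrt{\sigma_0^4+\sigma_1^4}}\right),\nonumber
\end{align}
where $Q(x)=\tfrac12\,{\sf erfc}(x/\sqrt2)$.

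Finally I would substitute the channel parameters. Writing $\sigma_i^2=P_s(|h_i|^2+1/\gamma)$ with $\gamma=P_s/N_w$, the factor $P_s$ cancels between numerator and denominator. Replacing $(\sigma_1^2-\sigma_0^2)$ by $|\,|h_1|^2-|h_0|^2|$ absorbs both cases $\sigma_0^2\lessgtr\sigma_1^2$, yielding exactly the expression stated in (\ref{gap2}). The only delicate point is the justification of the Gaussian approximation: a Berry--Esseen bound applied to the exponential summands would give the standard $O(1/\sqrt N)$ control, so the approximation is asymptotically tight; this is the main place where rigor has to be supplied beyond routine algebra, and it is what the theorem means by ``large $N$.''
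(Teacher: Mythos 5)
Your proposal is correct and follows essentially the same route as the paper's own proof: invoke the CLT to approximate each half-symbol energy $\bar Z_k^a,\bar Z_k^b$ as Gaussian with mean $N\sigma_i^2$ and variance $N\sigma_i^4$, evaluate the probability that their difference has the wrong sign via ${\sf erfc}$, and substitute $\sigma_i^2=P_s(|h_i|^2+1/\gamma)$ so that $P_s$ cancels. Your explicit symmetry argument for the two hypotheses and the Berry--Esseen remark are small refinements of, not departures from, the paper's argument.
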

\begin{proof}
See Appendix \ref{ap2n}.
\end{proof}

Due to the monotonicity of the ${\sf erfc}$ function, it is easy to show that the BER is a decreasing function with respect to $N$, indicating that increasing the sampling rate is always beneficial. In addition, when the SNR increases, the BER performance improves. In the asymptotic high SNR regime, i.e., $\gamma\rightarrow \infty$, (\ref{gap2}) reduces to
\begin{align} \label{g20}
\widetilde{P}_{se}^{CG} \approx \frac{1}{2}{\sf erfc}\left(\frac{{\sqrt N ||{h_1}{|^2} - |{h_0}{|^2}| }}{{\sqrt 2 \sqrt {{{|{h_0}{|^4} }} + {{|{h_1}{|^4}}}} }}\right).
\end{align}

The above expression indicates that, as the SNR increases, the BER reaches an error floor, {which is determined by the sampling rate $N$ and the relative channel difference (RCD) of the path, i.e., RCD$\triangleq \frac{{||{h_1}{|^2} - |{h_0}{|^2}| }}{{\sqrt {{{|{h_0}{|^4} }} + {{|{h_1}{|^4}}}} }}$}.

We now compare the BER performance of the proposed detector with that of the semi-coherent detector proposed in \cite{1}. Recalling the BER of the semi-coherent detector proposed in \cite{1}, it can be expressed as
\begin{align} \label{gap22}
\widetilde{P}_b^{CG} = \frac{1}{2}{\sf erfc}\left(\frac{{\sqrt {N} ||{h_1}{|^2} - |{h_0}{|^2}| }}{{ {{{|{h_0}{|^2} } + {{|{h_1}{|^2} + \frac{2}{\gamma }}}} }}}\right).
\end{align}

A close observation of (\ref{gap2}) and (\ref{gap22}) reveals that, to compare $\widetilde{P}_{se}^{CG}$ and $\widetilde{P}_b^{CG}$, it is sufficient to compare the {denominators} inside the ${\sf erfc}$ functions. Hence, let us compute the difference of the square of the two {denominators}, and we have
\begin{align}
&\left(|{h_0}|^2  + |{h_1}|^2 + \frac{2}{\gamma }\right)^2\notag\\
&\qquad \quad -\left({\sqrt 2 \sqrt {{{\left(|{h_0}{|^2} + \frac{1}{\gamma }\right)}^2} + {{\left(|{h_1}{|^2} + \frac{1}{\gamma }\right)}^2}} }\right)^2\notag\\
&=-\left(|{h_0}|^2  - |{h_1}|^2 \right)^2\leq 0
\end{align}

Since ${\sf {erfc}}(x)$ is a monotonically decreasing function with respect to $x$, we have
\begin{align}\label{eqncom}
\widetilde{P}_{se}^{CG} \geq \widetilde{P}_b^{CG}.
\end{align}
Hence, theoretically, the BER performance of the proposed SeCoMC detector is inferior. {The main reason is that the semi-coherent detector in \cite{1} is based on the absolute symbol energy, while the proposed detector is based on the energy difference of the first and second half of the entire symbol interval, which may causes some information loss.} However, it is worth highlighting that $\widetilde{P}_b^{CG}$ {can only be achieved when the threshold is perfect. In practice, it needs to be estimated, which causes performance degradation due to estimation error}. As will be shown later via simulation, the proposed SeCoMC detector actually performs better in practice.

\subsection{Unknown deterministic signal}
We now investigate the performance of the SeCoMC detector with unknown deterministic signal. Without any information about the signal, it is appropriate to apply the energy detector \cite{48}. It is also worthy highlighting that, since only the energy of the signal is used for detection, the results presented here are applicable to arbitrary deterministic signals.

\begin{theorem} \label{tp}
When $N$ is large, the BER of the SeCoMC detector with unknown deterministic signal is given by
\begin{align}\label{g35}
\widetilde{P}_{se}^{UD} = \frac{1}{2}{\sf erfc}\left(\frac{\sqrt N \left||h_1|^2 - |h_0|^2\right| }{2 \sqrt {\frac{|h_0|^2 +|h_1|^2}{\gamma}+ \frac{1}{\gamma^2 } }}\right).
\end{align}
\end{theorem}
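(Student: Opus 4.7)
The plan is to follow the same large-$N$ Gaussian approximation strategy that underlies Theorem~\ref{t1}, but with the distribution of the half-symbol energy $\bar Z_k^j = ||{\bf{\bar y}}_k^j||^2$ adapted to the deterministic-signal model $y[n]=h_i s[n]+w[n]$. Conditioned on $\{s[n]\}$, each $|y[n]|^2$ is (up to the factor $N_w/2$) a non-central chi-squared variate with two degrees of freedom and noncentrality $2|h_i|^2|s[n]|^2/N_w$, so $\bar Z_k^j$ is a scaled $\chi_{2N}^{'2}$ random variable. By the central limit theorem, for large $N$ it is well approximated by a Gaussian, so all I need are its first two moments.

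I would obtain those moments by expanding
\begin{equation*}
|y[n]|^2 = |h_i|^2|s[n]|^2 + 2\,\mathrm{Re}\bigl(h_i^* s[n]^* w[n]\bigr) + |w[n]|^2.
\end{equation*}
Taking expectations and summing over $n$ gives mean $|h_i|^2\sum_n|s[n]|^2 + N N_w$, which becomes $N(|h_i|^2 P_s + N_w)$ after invoking the large-$N$ identification $\sum_n |s[n]|^2 \approx N P_s$. For the variance, the Gaussian cross term contributes $2N_w|h_i|^2\sum_n|s[n]|^2 \approx 2NN_w|h_i|^2 P_s$, the $|w[n]|^2$ term contributes $NN_w^2$, and their covariance vanishes because $\mathrm{Re}(h_i^* s[n]^* w[n])$ is odd in $w[n]$ while $|w[n]|^2$ is even and $w[n]$ is circularly symmetric. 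Hence $\mathrm{Var}(\bar Z_k^j)\approx N(2N_w|h_i|^2 P_s + N_w^2)$.

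The final step is to assemble the test statistic $\bar Z_k^a - \bar Z_k^b$. Under $H_0$ (Manchester code ``10''), $\bar Z_k^a$ is parameterised by $h_1$ and $\bar Z_k^b$ by $h_0$, and independence of the noise across the two half-intervals gives a Gaussian with mean $N P_s(|h_1|^2-|h_0|^2)$ and variance $N\bigl(2N_w P_s(|h_0|^2+|h_1|^2)+2N_w^2\bigr)$. The SeCoMC decision rule aligns the sign of the comparison with whichever of $|h_0|^2$ and $|h_1|^2$ is larger, so by the symmetric roles of $H_0$ and $H_1$ the unconditional BER equals
\begin{equation*}
Q\!\left(\frac{NP_s\bigl||h_1|^2-|h_0|^2\bigr|}{\sqrt{N\bigl(2N_w P_s(|h_0|^2+|h_1|^2)+2N_w^2\bigr)}}\right).
\end{equation*}
Dividing numerator and denominator by $N_w$ to introduce $\gamma = P_s/N_w$ and using $Q(x)=\tfrac{1}{2}{\sf erfc}(x/\sqrt{2})$ then produces (\ref{g35}).

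I expect the main obstacle to be the variance computation together with the justification of $\sum_n|s[n]|^2 \approx NP_s$. For an unknown deterministic waveform this must be read as an empirical-power convergence over the window of $N$ samples, a mild ergodicity-type assumption standard in energy-detection analyses. The remaining steps---cancellation of the odd-order cross moment in the variance, conversion of the $Q$-function to ${\sf erfc}$, and absorption of the sign of $|h_1|^2-|h_0|^2$ into the absolute value via the $q$-independent symmetry of the detector---are routine.
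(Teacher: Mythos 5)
Your proposal is correct and follows essentially the same route as the paper's Appendix~C: identify each half-symbol energy as a (scaled) noncentral $\chi_{2N}^{'2}$ statistic, invoke the central limit theorem for large $N$ with mean $N(|h_i|^2P_s+N_w)$ and variance $N(2|h_i|^2P_sN_w+N_w^2)$, and then reuse the difference-of-Gaussians ${\sf erfc}$ argument from Theorem~\ref{t1}. The only cosmetic difference is that you derive the two moments by directly expanding $|y[n]|^2$ under an empirical-power condition $\sum_n|s[n]|^2\approx NP_s$, whereas the paper assumes constant per-sample power $|s[n]|^2=P_s$ and reads the same moments off the noncentral chi-squared distribution.
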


\begin{proof}
See Appendix \ref{ap3}.
\end{proof}
In the asymptotic high SNR regime, i.e., $\gamma\rightarrow \infty$, (\ref{gap2}) reduces to
\begin{align} \label{g37}
\widetilde{P}_{se}^{UD}\approx\frac{1}{2}{\sf erfc}\left(\frac{\sqrt {N\gamma} \left||h_1|^2 - |h_0|^2\right| }{2 \sqrt {|h_0|^2 +|h_1|^2}}\right).
\end{align}
As expected, we can see that the BER {performance} of unknown deterministic signal also improves when $N$ becomes large. However, unlike the complex Gaussian scenario, no error floor exists.

{Comparing the BER performance for the scenarios with complex Gaussian signal and unknown deterministic signal, i.e., $\widetilde{P}_{se}^{CG}$ and $\widetilde{P}_{se}^{UD}$, the later is obviously better. This can be explained as follows: When $N$ is large, the received signal energy $\bar Z$ can be modeled as a Gaussian random variable. For complex Gaussian signal, the mean and variance of $\bar Z$ are $\mu_{gi}=N\sigma _{i}^2$ and $\sigma _{gi}^2=N\sigma _{i}^4$, respectively. Similarly, for deterministic signal, the mean and variance are given by $\mu_{pi}=N\sigma _{i}^2$ and  $\sigma_{pi}^2=N(2|h_i|^2P_sN_w+N_w^2)$, respectively. Now, let us consider the case $\sigma_0^2<\sigma_1^2$ for example, the error occurs when $\bar Z|\bar H_{0}> \bar Z|\bar H_{1}$, i.e., $\Delta Z=\bar Z|\bar H_{0}-\bar Z|\bar H_{1}>0$. It can be seen that the variance of $\Delta Z$ in the deterministic signal case is smaller. The BER can be represented by the shadow area in Fig. \ref{fig:fig112}, it is easy to see that BER of the deterministic signal is smaller than that of the complex Gaussian signal.}

\textcolor{blue}{\begin{figure}[htbp]
\centering
\includegraphics[width=2.5in]{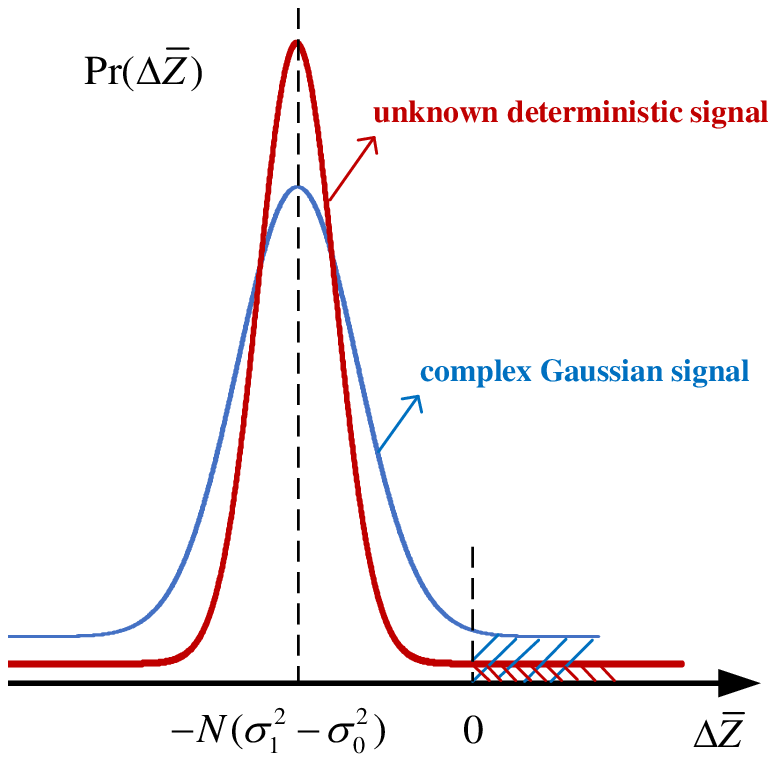}
\caption{The $\Pr (\Delta\bar Z)$ for the two scenarios when $\sigma_0^2<\sigma_1^2$}
\label{fig:fig112}
\end{figure}}

\section{Non-coherent Manchester detector}
The SeCoMC detector proposed in the previous section still needs to estimate the relationship of $\sigma_0^2$ and $\sigma_1^2$, which consumes some extra resources. Hence, in this section, we propose the NoCoMC detector based on differential Manchester coding which requires no training. Please note, the proposed NoCoMC detector differs from the non-coherent detectors proposed in \cite{2,5}, which still require some form of estimation.

\subsection{Complex Gaussian signal}

With differential Manchester coding, $d_k$ is determined by two adjacent symbols. As such, the detection of $d_k$ is based on two consecutive received signal vectors ${\bf{y}}_{{k - 1}}{\bf{y}}_{{k }}$. In this case, the ML detector can be obtained as
\begin{align} \label{g21}
L({\bf{y}}_{k-1}{\bf{y}}_k) = \frac{{\text{Pr}({\bf{y}}_{{k - 1}}{\bf{y}}_{{k }}|H_1)}}{{\text{Pr}({\bf{y}}_{{k - 1}}{\bf{y}}_{k}|H_0)}}
{\begin{array}{*{20}{c}}
{{H_1}}\\
 \gtrless\\
{{H_0}}
\end{array}}
1.
\end{align}
Now, let {$\hat H_{0}$ and $\hat H_{1}$} denote the hypotheses associated with $\hat d$=0 and $\hat d$=1, respectively, then we have
\begin{align}
&{\text{Pr}({\bf{y}}_{{k - 1}}{\bf{y}}_{{k }}|H_1)}\notag\\
&=\frac{1}{2}\text{Pr}({\bf{\hat y}}_{k-1}^a|\hat H_{0})\text{Pr}({\bf{\hat y}}_{k-1}^b|\hat H_{1})\text{Pr}({\bf{\hat y}}_{k}^a|\hat H_{1})\text{Pr}({\bf{\hat y}}_{k}^b|\hat H_{0})+ \nonumber \\
&\quad\frac{1}{2}\text{Pr}({\bf{\hat y}}_{k-1}^a|\hat H_{1})\text{Pr}({\bf{\hat y}}_{k-1}^b|\hat H_{0})\text{Pr}({\bf{\hat y}}_{k}^a|\hat H_{0})\text{Pr}({\bf{\hat y}}_{k}^b|\hat H_{1}), \label{gp1}\\
&{\text{Pr}({\bf{y}}_{{k - 1}}{\bf{y}}_{{k }}|H_0)}\notag\\
&=\frac{1}{2}\text{Pr}({\bf{\hat y}}_{k-1}^a|\hat H_{0})\text{Pr}({\bf{\hat y}}_{k-1}^b|\hat H_{1})\text{Pr}({\bf{\hat y}}_{k}^a|\hat H_{0})\text{Pr}({\bf{\hat y}}_{k}^b|\hat H_{1})
+\nonumber \\
&\quad\frac{1}{2}\text{Pr}({\bf{\hat y}}_{k-1}^a|\hat H_{1})\text{Pr}({\bf{\hat y}}_{k-1}^b|\hat H_{0})\text{Pr}({\bf{\hat y}}_{k}^a|\hat H_{1})\text{Pr}({\bf{\hat y}}_{k}^b|\hat H_{0}), \label{gp2}
\end{align}
Now, {defining
$\hat Z_{k}^{j}=||{\bf{\hat y}}_{{k}}^{j}|{|^2}$} and substituting (\ref{gp1}) and (\ref{gp2}) into (\ref{g21}), after some algebraic manipulations, we have
\begin{align} \label{gpj2}
(\hat Z_{k-1}^a - \hat Z_{k-1}^b)(\hat Z_{k}^a - \hat Z_{k}^b)
{\begin{array}{*{20}{c}}
{{H_0}}\\
 \gtrless\\
{{H_1}}
\end{array}}
0.
\end{align}

Interestingly, we see that the NoCoMC detector also resembles the energy detector. Different from the SeCoMC detector, which compares the energy level of the first and second half of a single symbol interval, the NoCoMC detector needs to jointly consider energy difference of two adjacent symbol intervals. According to (\ref{gpj2}), the decision region of the NoCoMC detector as illustrated in Fig. \ref{fig:fig4} can be defined as:
\begin{itemize}
\item Region $R1$: if $\hat Z_{k-1}^a-\hat Z_{k-1}^b\leq$ 0 and $\hat Z_{k}^a-\hat Z_{k}^b>$    0, then $d_k$=1;
\item Region $R2$: if $\hat Z_{k-1}^a-\hat Z_{k-1}^b>$    0 and $\hat Z_{k}^a-\hat Z_{k}^b\leq$ 0, then $d_k$=1;
\item Region $R3$: if $\hat Z_{k-1}^a-\hat Z_{k-1}^b\leq$ 0 and $\hat Z_{k}^a-\hat Z_{k}^b\leq$ 0, then $d_k$=0;
\item Region $R4$: if $\hat Z_{k-1}^a-\hat Z_{k-1}^b>$    0 and $\hat Z_{k}^a-\hat Z_{k}^b>$    0, then $d_k$=0.
\end{itemize}
\begin{figure}[htbp]
\centering
\includegraphics[width=3in]{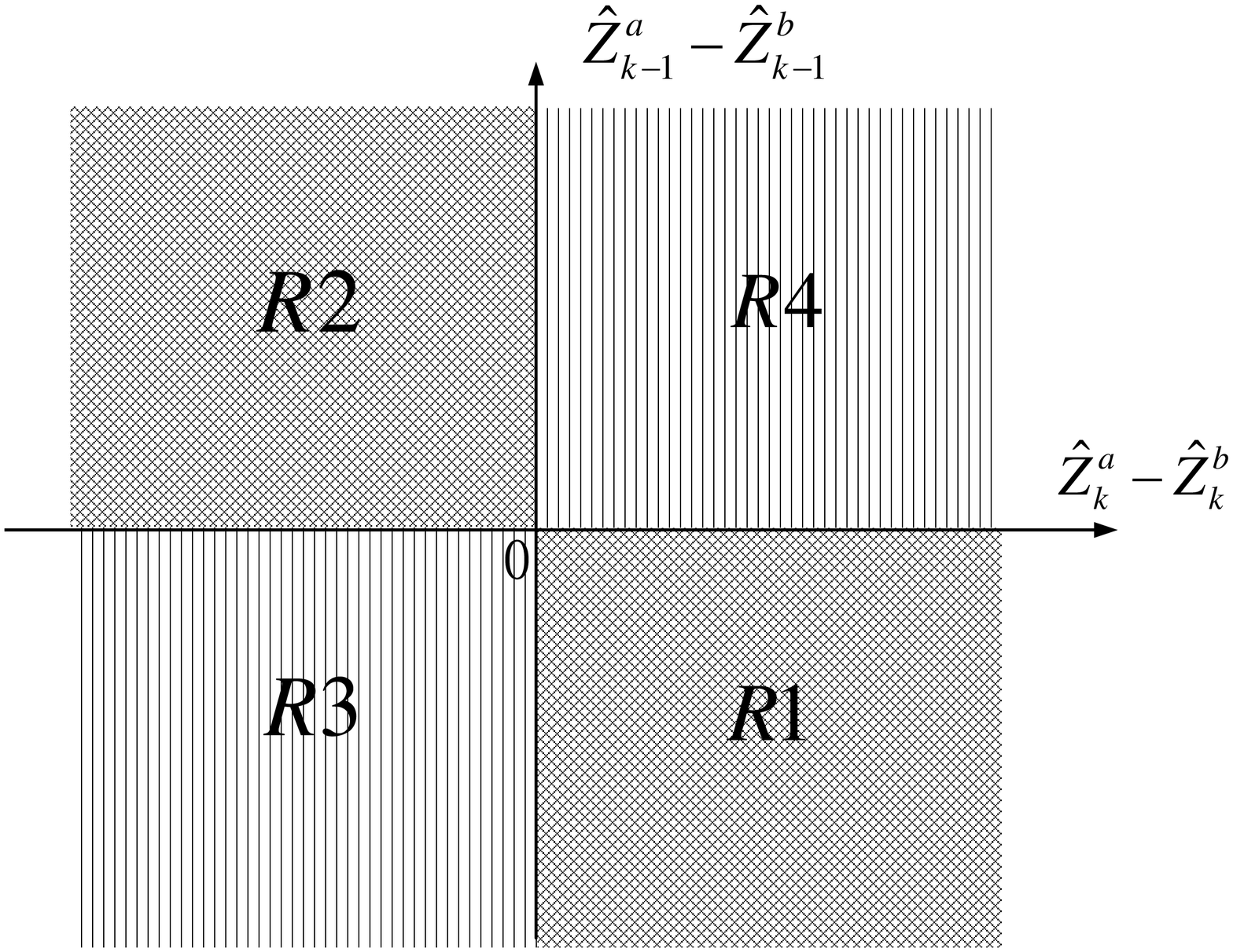}
\caption{Decision regions of the NoCoMC detector}
\label{fig:fig4}
\end{figure}

For symbol detection, every time the reader receives the signal vector ${\bf{y}}_k$, it compares the energy difference of the first and second half of the symbol interval, and stores the outcome in memory. Hence, the NoCoMC detector requires 1 bit additional memory. The algorithm for NoCoMC detector is shown as\\
\begin{algorithm}[h]
\caption{NoCoMC detector}
\begin{algorithmic} [1]
\Require The received signal vectors: [${\bf{y}}_0$; ${\bf{y}}_1$; $\cdots$ ${\bf{y}}_k$; $\cdots$ ${\bf{y}}_K$]
\Ensure The detected symbols: [$d_1$, $\cdots$, $d_k$ $\cdots$, $d_K$]
\State Get $A_0\triangleq \hat Z_{0}^a$ and $B_0\triangleq \hat Z_{0}^b$
\State For $k$ from 1 to $K$\\
    \quad compute $\hat Z_{k}^a$=$||{\bf{\hat y}}_{k}^a||^2$,$\hat Z_{k}^b$=$||{\bf{\hat y}}_{k}^b||^2$\\
     \quad if $k=1$\\
  \quad\quad if $(A_0-B_0)(\hat Z_{1}^a-\hat Z_{1}^b)<0$, then let $d_k=1$, else\\ \qquad \quad $d_k=0$， end if\\
  \quad\quad else if $(\hat Z_{k-1}^a-\hat Z_{k-1}^b)(\hat Z_{k}^a-\hat Z_{k}^b)<0$, then let $d_k=1$,\\ \qquad \quad else $d_k=0$, end if\\
  \quad end if\\
  \quad end for\\
  Return [$d_1$, $\cdots$, $d_k$ $\cdots$, $d_K$]
\end{algorithmic}
\end{algorithm}

To this end, we present a detailed performance analysis on the achievable BER of the proposed NoCoMC detector. We have the following important result:

\begin{theorem} \label{t02}
The BER of the NoCoMC detector with complex Gaussian signal is given by
\begin{align}  \label{g29}
P_{no}^{CG}&=2P_{se}^{CG}\left(1-P_{se}^{CG}\right),
\end{align}
\end{theorem}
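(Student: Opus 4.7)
The plan is to reduce the NoCoMC symbol error to a pair of sub-decision errors, each identically distributed to a SeCoMC symbol error, and then exploit independence. First I would rewrite the rule in (\ref{gpj2}) purely in terms of the signs $s_j \triangleq \mathrm{sgn}(\hat{Z}_j^a - \hat{Z}_j^b)$: the detector outputs $d_k=0$ whenever $s_{k-1}s_k = +1$ and $d_k=1$ whenever $s_{k-1}s_k = -1$. In the noiseless limit these two cases coincide exactly with the differential-Manchester encoding rule (same half-interval polarity versus flipped polarity with respect to the previous symbol). Hence, letting $E_j$ be the event that $s_j$ is opposite to its noiseless value, the NoCoMC detector errs on $d_k$ if and only if exactly one of $E_{k-1}$ and $E_k$ occurs.

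Next I would identify $\Pr(E_j)$ with $P_{se}^{CG}$. Conditional on the transmitted differential Manchester symbol $\hat d_j\in\{\text{``01''},\text{``10''}\}$, the pair $(\hat Z_j^a,\hat Z_j^b)$ consists of two independent scaled chi-squared variables, one with variance parameter $\sigma_0^2$ and the other with $\sigma_1^2$; this is exactly the joint distribution of the SeCoMC half-interval energies $(\bar Z^a,\bar Z^b)$ under the corresponding Manchester symbol. By the symmetry between ``01'' and ``10'', the probability that $s_j$ picks the wrong ordering is independent of $\hat d_j$ and equals the SeCoMC symbol error probability derived in Theorem~\ref{t0}, i.e., $\Pr(E_j)=P_{se}^{CG}$.

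Independence of $E_{k-1}$ and $E_k$ then follows from the model assumptions: the four energy statistics $\hat Z_{k-1}^a,\hat Z_{k-1}^b,\hat Z_k^a,\hat Z_k^b$ are built from disjoint index sets of mutually independent Gaussian samples $s[n]$ and $w[n]$, while the channel coefficients are fixed within a coherence block. Combining the three observations,
\begin{align}
P_{no}^{CG}=\Pr(E_{k-1})\bigl(1-\Pr(E_k)\bigr)+\bigl(1-\Pr(E_{k-1})\bigr)\Pr(E_k)=2P_{se}^{CG}\bigl(1-P_{se}^{CG}\bigr),
\end{align}
as claimed.

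The main obstacle will be the second step, specifically justifying that $\Pr(E_j)$ is exactly $P_{se}^{CG}$ rather than some related but distinct quantity. The SeCoMC rule is phrased conditionally on the (estimated) ordering of $\sigma_0^2,\sigma_1^2$, while the NoCoMC sub-event $E_j$ is intrinsically ordering-free; one must therefore verify that the ``wrong ordering'' event in the NoCoMC analysis and the ``wrong bit'' event in the SeCoMC analysis coincide distributionally, which ultimately boils down to the label symmetry between the two halves of a (differential) Manchester symbol.
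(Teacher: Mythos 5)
Your proposal is correct and follows essentially the same route as the paper's proof: both reduce the NoCoMC error to the event that exactly one of two independent half-interval energy comparisons (each distributed identically to the SeCoMC comparison, so each erring with probability $P_{se}^{CG}$) goes wrong, yielding $2P_{se}^{CG}(1-P_{se}^{CG})$. Your sign-flip-event framing merely packages more compactly the explicit case enumeration over $H_0$/$H_1$ and the two symbol configurations that the paper carries out, and your closing concern about identifying $\Pr(E_j)$ with $P_{se}^{CG}$ is resolved exactly as you suggest, by the label symmetry between the two halves together with the observation that $\hat Z_k^j$ and $\bar Z_k^j$ share the same distribution.
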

where $P_{se}^{CG}$ is the BER of semi-coherent detector presented in Theorem \ref{t0}.
\proof
See Appendix \ref{ap4}.
\endproof
The BER of the NoCoMC detector given in (\ref{g29}) is actually quite intuitive. {The error occurs when only one of the adjacent symbol is incorrect, hence, the BER of the NoCoMC detector is $2P_{se}^{CG}(1-P_{se}^{CG})$. Moreover, since $P_{se}^{CG}\leq 1/2$, we have $P_{no}^{CG}\geq P_{se}^{CG}$}. Hence, the BER performance of NoCoMC detector is inferior to that of SeCoMC detector. To gain further insights, we now look into the asymptotic regime, where simple expression can be obtained.

\begin{theorem} \label{t2}
When $N$ is large, the BER of NoCoMC detector with complex Gaussian signal can be derived as
\begin{align}
\widetilde{P}_{no}^{CG}&=2\widetilde{P}_{se}^{CG}\left(1-\widetilde{P}_{se}^{CG}\right)\\
&=\frac{1}{2} - \frac{1}{2}{\sf erf}^2\left(\frac{{\sqrt N ||h_0|^2-|h_1|^2| }}{{\sqrt 2 \sqrt {{{(|{h_0}{|^2} + \frac{1}{\gamma })}^2} + {{(|{h_1}{|^2} + \frac{1}{\gamma })}^2}} }}\right),\label{g440}
\end{align}
where $\widetilde{P}_{se}^{CG}$ is the BER of semi-coherent detector presented in Theorem \ref{t1}.
\end{theorem}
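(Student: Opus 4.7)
The plan is to combine Theorem \ref{t02}, which gives the exact identity $P_{no}^{CG} = 2 P_{se}^{CG}(1 - P_{se}^{CG})$, with the large-$N$ approximation from Theorem \ref{t1}. Crucially, the derivation of Theorem \ref{t02} in Appendix \ref{ap4} is purely a consequence of the conditional-independence structure of the two adjacent differential Manchester symbols, and makes no use of any asymptotic assumption on $N$. Hence the same product form is inherited in the large-$N$ regime once we substitute $P_{se}^{CG} \to \widetilde{P}_{se}^{CG}$ on both sides, which immediately delivers the first equality $\widetilde{P}_{no}^{CG} = 2\widetilde{P}_{se}^{CG}(1-\widetilde{P}_{se}^{CG})$.

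For the second equality in \eqref{g440}, I would just invoke an elementary algebraic identity on the $\sf erfc$ function. Writing $\widetilde{P}_{se}^{CG} = \tfrac{1}{2}{\sf erfc}(x)$ with $x$ denoting the argument appearing in \eqref{gap2}, and using the definition ${\sf erfc}(x) = 1 - {\sf erf}(x)$, one has $1 - \widetilde{P}_{se}^{CG} = \tfrac{1}{2}(1 + {\sf erf}(x))$. Multiplying the two factors yields $2\widetilde{P}_{se}^{CG}(1-\widetilde{P}_{se}^{CG}) = \tfrac{1}{2}(1-{\sf erf}(x))(1+{\sf erf}(x)) = \tfrac{1}{2} - \tfrac{1}{2}{\sf erf}^2(x)$, which matches the claimed closed form after substituting the explicit expression for $x$.

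There is no real analytical obstacle here, since both building blocks (the product rule of Theorem \ref{t02} and the Gaussian-tail form of Theorem \ref{t1}) are already in hand, and what remains is a two-line manipulation of $\sf erfc$. The only point worth a brief sentence of justification is that the $N \to \infty$ approximation commutes with the map $P \mapsto 2P(1-P)$; because this map is smooth and both factors are bounded in $[0,1]$, the Gaussian approximation underlying Theorem \ref{t1} transfers to $\widetilde{P}_{no}^{CG}$ with an error of the same order as in $\widetilde{P}_{se}^{CG}$. Hence the asymptotic NoCoMC BER follows without any further technical work.
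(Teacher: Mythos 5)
Your proposal is correct and matches the paper's own (very brief) argument: the paper simply states that the result follows ``in a similar fashion as in Theorem \ref{t02}'', i.e., the $2P(1-P)$ structure from the decision-region analysis of Appendix \ref{ap4} is combined with the large-$N$ Gaussian approximation $\widetilde{P}_{se}^{CG}=\tfrac{1}{2}{\sf erfc}(x)$ of Theorem \ref{t1}, followed by exactly the ${\sf erfc}$ manipulation you give, $2\cdot\tfrac{1}{2}(1-{\sf erf}(x))\cdot\tfrac{1}{2}(1+{\sf erf}(x))=\tfrac{1}{2}-\tfrac{1}{2}{\sf erf}^2(x)$. Your added remark that the smooth map $P\mapsto 2P(1-P)$ preserves the order of the approximation error is a harmless (and slightly more careful) addition, not a departure from the paper's route.
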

\proof The result can be obtained in a similar fashion as in Theorem \ref{t02}.
\endproof

In the asymptotic high SNR regime, (\ref{g440}) reduces to
\begin{align} \label{g31}
\widetilde{P}_{no}^{CG} \approx \frac{1}{2} - \frac{1}{2}{\sf erf}^2\left(\frac{{\sqrt N ||h_0|^2-|h_1|^2| }}{{\sqrt 2 \sqrt {{{|{h_0}{|^4} }} + {{|{h_1}{|^4} }}} }}\right),
\end{align}
which is independent of the operating SNR, indicating the {existence} of an error floor.

\subsection{Unknown deterministic signal}
We now investigate the BER performance of NoCoMC detector with unknown deterministic signal, and we have the following important result:
\begin{theorem} \label{tp1}
When $N$ is large, the BER of NoCoMC detector with unknown deterministic signal is given by
\begin{align}
\widetilde{P}_{no}^{UD} &= \frac{1}{2} - \frac{1}{2}{\sf erf}^2\left(\frac{\sqrt N \left||h_1|^2 - |h_0|^2\right| }{2 \sqrt {\frac{|h_0|^2 +|h_1|^2}{\gamma}+ \frac{1}{\gamma^2 } }}\right).\label{g39}
\end{align}
\end{theorem}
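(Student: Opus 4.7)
The plan is to mirror the argument used for the complex Gaussian case in Theorem \ref{t02} and Theorem \ref{t2}, and then invoke the single-symbol BER formula already derived in Theorem \ref{tp}. The key structural observation is that the NoCoMC decision rule in (\ref{gpj2}) depends only on the signs of the two consecutive ``first-half minus second-half'' energy differences $\Delta_{k-1}\triangleq \hat Z_{k-1}^a - \hat Z_{k-1}^b$ and $\Delta_k\triangleq \hat Z_{k}^a - \hat Z_{k}^b$: an error in $d_k$ occurs precisely when exactly one of these two signs is ``wrong'' relative to the half-symbol assignment actually transmitted. This combinatorial error structure is the same whether the ambient signal is Gaussian or deterministic, so the relation $P_{no}=2P_{se}(1-P_{se})$ should persist in the deterministic-signal case as well.

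First, I would observe that the sign of $\Delta_k$ is exactly the test statistic of the SeCoMC detector over the $k$-th symbol interval; consequently, under the large-$N$ Gaussian approximation used to derive Theorem \ref{tp}, the probability that $\mathrm{sign}(\Delta_k)$ disagrees with the underlying Manchester pattern equals $\widetilde P_{se}^{UD}$, with the argument of ${\sf erfc}$ given by (\ref{g35}). Second, since the noise samples $w[n]$ are i.i.d.\ across all samples and the two intervals $k-1$ and $k$ use disjoint noise realizations, $\Delta_{k-1}$ and $\Delta_k$ are independent; hence the probability that exactly one of them carries the wrong sign is $2\widetilde P_{se}^{UD}\bigl(1-\widetilde P_{se}^{UD}\bigr)$, paralleling Appendix \ref{ap4}.

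Third, I would finish by applying the algebraic identity $2x(1-x)=\tfrac12-\tfrac12(1-2x)^2$ with $x=\widetilde P_{se}^{UD}=\tfrac12{\sf erfc}(\xi)$, where $\xi$ denotes the ${\sf erfc}$ argument in (\ref{g35}). Using $1-{\sf erfc}(\xi)={\sf erf}(\xi)$ immediately converts $(1-2x)^2$ into ${\sf erf}^2(\xi)$, yielding (\ref{g39}). The only nontrivial step I expect is justifying the independence of $\Delta_{k-1}$ and $\Delta_k$ under a deterministic $s[n]$: the signal samples in the two intervals need not be identically distributed, but because Theorem \ref{tp} already absorbs the deterministic signal into the mean and variance of each per-interval energy via the CLT, and because the cross-interval noise samples are independent, this independence holds and no extra work beyond bookkeeping is required.
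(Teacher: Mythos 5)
Your proposal is correct and follows essentially the same route the paper intends: Theorem~\ref{tp1} is the deterministic-signal analogue of Theorem~\ref{t2}, obtained by repeating the Appendix~\ref{ap4} argument (exactly one of the two independent half-interval sign decisions is wrong, each with probability $\widetilde P_{se}^{UD}$ from Theorem~\ref{tp}) and then applying $2x(1-x)=\tfrac12-\tfrac12\,{\sf erf}^2(\xi)$ with $x=\tfrac12{\sf erfc}(\xi)$. Your added remark on why independence survives for deterministic $s[n]$ (disjoint i.i.d.\ noise samples, constant per-sample signal energy $|h_i|^2P_s$) is a correct and slightly more careful justification than the paper's ``similar fashion'' dismissal.
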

When $\gamma$ approaches infinity, (\ref{g39}) can be simplified as
\begin{align} \label{g42}
\widetilde{P}_{no}^{UD}\approx \frac{1}{2} - \frac{1}{2}{\sf erf}^2\left(\frac{\sqrt {N\gamma} \left||h_1|^2 - |h_0|^2\right| }{2 \sqrt {|h_0|^2 +|h_1|^2}}\right).
\end{align}
Similar to the SeCoMC detector, we see that the BER of unknown deterministic signal does not settle and continues to decrease with the SNR.

\section{Numerical results} \label{svi}

In this section, we provide simulation results to validate the correctness of the analytical expressions and evaluate the performance of the proposed detectors. Since the distance between the source and the tag/reader is much longer than that between the tag and reader, we set ${h_{st}},{h_{sr}}\sim CN(0,1)$, ${h_{tr}}\sim CN(0,10)$, $K=30$, tag coefficient $\alpha$=0.5, and the AGWN follows $w(t)\sim CN(0,1)$. Without losing generality, $8$-PSK is used to model the unknown deterministic signal.

\begin{figure}[!ht]
\begin{minipage}[t]{0.48\linewidth}
\centering
\subfigure[Complex Gaussian signal] { \label{fig:fig6a}
\includegraphics[width=3in]{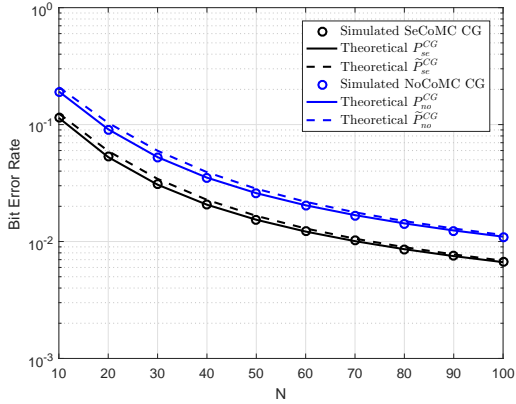}}
\end{minipage}
\\
\begin{minipage}[t]{0.48\linewidth}
\centering
\subfigure[$8$-PSK signal] { \label{fig:fig6b}
\includegraphics[width=3in]{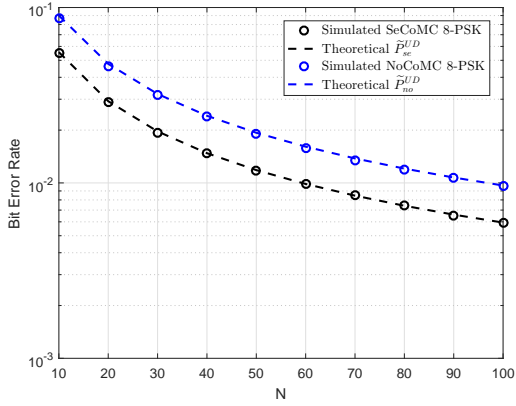}}
\end{minipage}
\caption{Impact of sampling rate $N$ on the BER performance}
\label{fig:fig6}
\end{figure}

Fig. \ref{fig:fig6} illustrates the BER performance of the proposed SeCoMC and NoCoMC detectors when $T=20$, {$q=0.5$}, $\gamma = 5$ dB, and $\mbox{RCD}=0.5$.
As can be readily observed, the analytical curves drawn according to $P_{se}^{CG}$, $P_{no}^{CG}$ match perfectly with the simulation curves, while the approximation curves drawn according to $\widetilde{P}_{se}^{CG}$, $\widetilde{P}_{no}^{CG}$, $\widetilde{P}_{se}^{UD}$, and $\widetilde{P}_{no}^{UD}$ are also reasonably accurate, especially for large $N$. In addition, for both detectors, the BER decrease as the sampling rate $N$ increase. Moreover, comparing the BER performance of the SeCoMC detector with the NoCoMC detector, we see that the BER performance of SeCoMC detector is always superior than that of the NoCoMC detector under the same condition, mainly due to error propagation between adjacent {differential Manchester codes}.

\begin{figure}[!ht]
\centering
\includegraphics[width=3in]{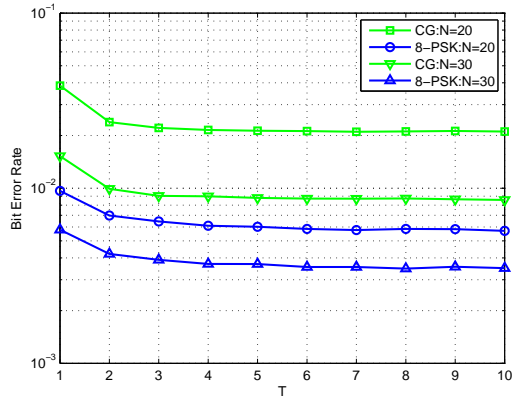}
\caption{Impact of training length on the BER performance}
\label{fig:fig5}
\end{figure}

Fig. \ref{fig:fig5} examines the impact of training length on the BER performance of the SeCoMC detector with $\gamma =10$ dB, $q=0.5$, and $\mbox{RCD}=0.5$. As can be observed, while it is rather intuitive that increasing the training length $T$ would lead to better BER performance, it is quite surprising that the performance gain becomes marginal when $T$ increases beyond 2. This is a rather encouraging outcome, indicating that regardless of the sampling rate, only a small fraction of the valuable time resource needs to be used for training.

\begin{figure}[!ht]
\centering
\includegraphics[width=3in]{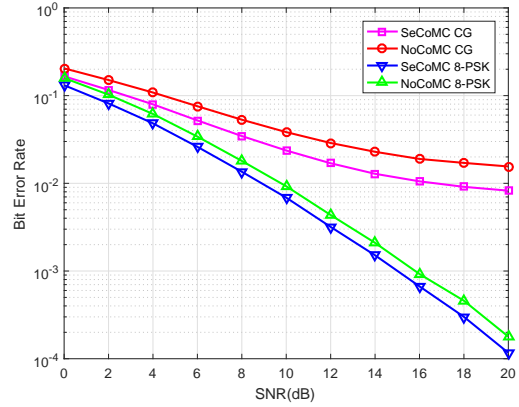}
\caption{Impact of SNR on the BER performance}
\label{fig:fig55}
\end{figure}

Fig.\ref{fig:fig55} compares the BER performance with the complex Gaussian and the 8-PSK signal with $q=0.5$, $N=20$, $T=2$, and $\mbox{RCD}=0.5$. For both detectors, we see that the BER with 8-PSK signal is always lower than that of the complex Gaussian signal. In the high SNR regime, the BER settles for the complex Gaussian signal while keeps falling for the 8-PSK signal as predicted by (\ref{g20}), (\ref{g37}), (\ref{g31}) and (\ref{g42}). Please note, similar trend has already been observed in \cite{1,2}.

\begin{figure}[!ht]
\centering
\includegraphics[width=3in]{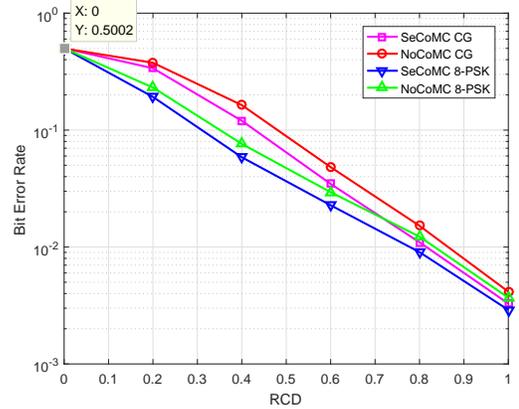}
\caption{Impact of RCD on the BER performance}
\label{fig:fig56}
\end{figure}

Fig.\ref{fig:fig56} illustrates the impact of RCD on the BER performance with $q=0.5$, $N=20$, $T=2$, and $\gamma=5$ dB. As can be readily observed, the BER curves of all four cases decrease with the increasing of RCD. This is intuitive, since both the SeCoMC and NoCoMC detectors are energy based. When the energy difference of the two hypotheses becomes more substantial, the detection performance improves. For the extreme case with $\mbox{RCD}=0$, we see that the BER of all four cases is nearly 0.5. This is expected, since $\mbox{RCD}=0$ implies identical energy of the two hypotheses, hence no reliable detection is possible, luckily, such scenario is considered unlikely.

\begin{figure}[htbp]
\centering
\includegraphics[width=3in]{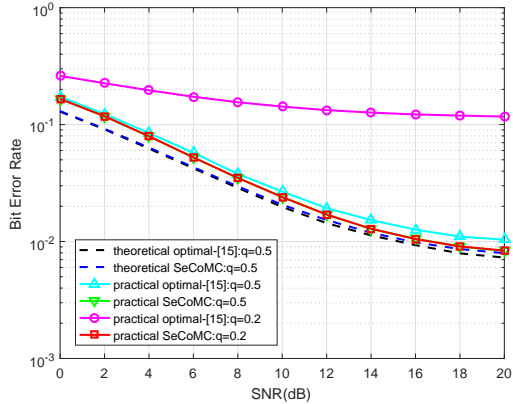}
\caption{BER comparison: SeCoMC detector VS. optimal detector \cite{1}}
\label{fig:fig8}
\end{figure}

Fig. \ref{fig:fig8} compares the BER performance of the SeCoMC detector with the optimal detector proposed in \cite{1} with complex Gaussian signal when $N=20$, $\mbox{RCD}=0.5$, and $T=2$ for different $q$. For fair comparison, the sampling rate of the two detectors is the same, i.e., $N$ samples are collected during each $\bar d$ interval in this paper, while $2N$ samples are collected during each $d$ interval in \cite{1}. Let us take a close look at the two dash curves in Fig. \ref{fig:fig8}, it can be seen that the curve associated with optimal detector in \cite{1} is slightly below in the high SNR regime. {The main reason is that there is some {information} loss due to the subtraction of the symbol energy in SeCoMC detector}. However, looking at the practical curves, we see that the proposed detector actually outperforms the optimal detector in \cite{1}, and the performance gain becomes more pronounced with higher SNR. The main reason lies in the fact that the detector in \cite{1} needs to estimate the decision threshold, whose accuracy is strictly limited by the estimation method and finite samples. In contrast, the proposed detector does not require the estimation of the decision threshold, hence is more robust in practice. Another disadvantage of the detector in \cite{1} is that, the detection process starts after the estimation of threshold, which incurs additional communication delay. Finally, we see that the distribution of information bits has a significant impact on the BER performance of the detectors in \cite{1}. For instance, the BER of the detector in \cite{1} is significantly higher when $q=0.2$. It is mainly because the decision threshold is estimated based on the assumption that ``0'' and ``1'' are equally probable.

\begin{figure}[!ht]
\centering
\includegraphics[width=3in]{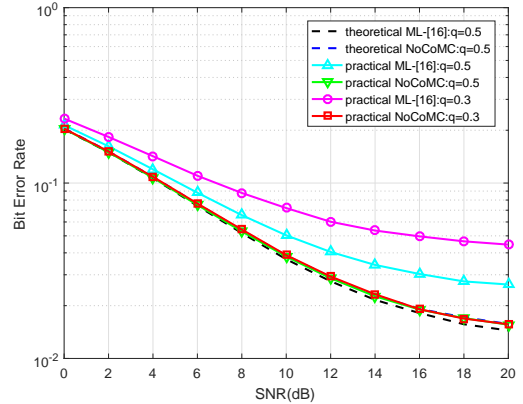}
\caption{BER comparison: NoCoMC detector VS. ML detector \cite{2}}
\label{fig:fig9}
\end{figure}

Fig. \ref{fig:fig9} compares the BER performance of the proposed NoCoMC detector and the ML detector in \cite{2} with complex Gaussian signal when $N=20$, $\mbox{RCD}=0.5$, and $T=2$ for different $q$. Similar to Fig. \ref{fig:fig8}, we observe that the theoretical BER performance of the NoCoMC detector is slightly higher than the ML detector, while the practical BER performance of the NoCoMC detector is substantially better than the ML detector. Please note, even if $q=0.5$, the practical BER performance of the ML detector deviates significantly away from the theoretical performance. The reason is that, given equally probable ``0'' and ``1'' in the original information bit sequence, the resulting bit sequence after differential modulation is no longer balanced, which degrades the estimation accuracy of the decision threshold. When $q=0.3$, the BER performance of the ML detector is even worse, while the BER performance of the proposed NoCoMC detector remains the same.
\section{Conclusion}
With Manchester and differential Manchester coding at the tag, this paper have proposed the SeCoMC and NoCoMC detectors to enable reliable detection. In additional, analytical closed-form expressions are derived for the BER of the system. Simulation results show that the BER performance of deterministic ambient signal is better, and the SeCoMC detector outperforms the NoCoMC detector. Moreover, the proposed detectors achieve superior BER performance compared with prior detectors in practice. Unlike the prior detectors, due to the unique property of the Manchester code, the proposed SeCoMC and NoCoMC detectors can work equally well with arbitrary distribution of the information bits. Furthermore, the proposed detectors enable immediate detection of each symbol, hence do not introduce any extra delay during the detection process.
\appendices
\section{Proof of Theorem 1 }\label{ap2}
Since we are dealing with the energy of the received signal, we find it more convenient to work in the real number domain. As such, we denote ${h_i}s[n] = s_i^R[n] + js_i^I[n]$ and $w[n] = {w^R}[n] + j{w^I}[n]$, where $s_i^R[n]$ and $s_i^I[n]$ represent the real and the imaginary parts of ${h_i}s[n]$, while ${w^R}[n]$ and ${w^I}[n]$ represent the real and the imaginary parts of $w[n]$, respectively. Then, the received signal energy can be expressed as
\begin{align} \label{g14}
{\bar Z}&=\sum\limits_{n = 0}^{N-1} {{{|{h_i}s[n] + w[n]|}^2}}\notag\\
&=\sum\limits_{n = 0}^{N-1} \left[(s_i^R[n]+{w^R}[n])^2+(s_i^I[n]+{w^I}[n])^2\right].
\end{align}

With complex Gaussian signal, $s_i^R[n]$ and $s_i^I[n]$ are both zero-mean Gaussian random variables which are independent of ${w^R}[n]$ and ${w^I}[n]$. Hence, $(s_i^R[n]+{w^R}[n])$ and $(s_i^I[n]+{w^I}[n])$ are zero-mean Gaussian random variables with variance $\frac{\sigma _{i}^2}{2}$. Therefore, $\bar Z$ follows the central chi-square distribution with $2N$ degree of freedom (DOF), i.e., $\bar Z \sim$ ${\chi}^2_{2N}$.

To this end, the PDF of $\bar Z$ under the hypothesis $\bar H_{i}$ is given by
\begin{align}
\text{Pr}(\bar Z|\bar H_{i}) = \frac{{\bar Z}^{N-1}e^{-\frac{\bar Z}{\sigma _{i}^{2}}}}{\Gamma(N)\sigma _{i}^{2N}},\quad \bar Z>0,
\end{align}
where $\Gamma(x)$ denotes the gamma function.

Now, assuming that the probabilities of hypothesis $H_1$ and $H_0$ are $q$ and $1-q$ $(0 \leq q \leq 1)$, respectively. Thus, if $\sigma _{0}^2> \sigma _{1}^2$, the corresponding BER can be computed by
\begin{align}
P_{se}^{CG} &= q\text{Pr}(\bar Z|\bar H_{0}<\bar Z|\bar H_{1}) + (1-q)\text{Pr}(\bar Z|\bar H_{0}< \bar Z|\bar H_{1})\label{g115}\\
&=\text{Pr}(\bar Z|\bar H_{0}< \bar Z|\bar H_{1})\label{g116}\\
&=\text{Pr}\left(\frac{\bar Z|\bar H_{0}}{\bar Z|\bar H_{1}}<1 \right).\label{g1116}
\end{align}
{Since the ratio of two independent chi-square random variables follows the F-distribution, (\ref{g1116}) can be evaluated as
\begin{align}
P_{se}^{CG}
&={\frac{\Gamma(2N)\sigma _{1}^{2N}}{N\Gamma^2(N)\sigma _{0}^{2N}}} \cdot {_2F_1}\left(N,2N;N+1;-\frac{\sigma _{1}^{2}}{\sigma _{0}^{2}}\right),\label{g117}
\end{align}
where ${_2F_1}(a, b ; c ; -x)$ denotes the Gauss hypergeometric function \cite{Table}.}

If $\sigma _{0}^2 < \sigma _{1}^2$, the corresponding BER can be similarly computed by
\begin{align} \label{g114}
P_{se}^{CG} &= {\frac{\Gamma(2N)\sigma _{0}^{2N}}{N\Gamma^2(N)\sigma _{1}^{2N}}} \cdot {_2F_1}\left(N,2N;N+1;-\frac{\sigma _{0}^{2}}{\sigma _{1}^{2}}\right).
\end{align}

To this end, combining (\ref{g117}) and (\ref{g114}) together yields the desired result.

\section{Proof of Theorem 2}\label{ap2n}
When $N$ is sufficiently large, the central limit theorem can be invoked to simplify the analysis. In particular, $\bar Z$ can be modeled by the Gaussian distribution with mean $\mu_{gi}=N\sigma _{i}^2$ and variance $\sigma _{gi}^2=N\sigma _{i}^4$. As such, the PDF of $\bar Z$ under the hypothesis ${\bar H_{i}}$ can be written as
\begin{align}
\text{Pr}(\bar Z|\bar H_{i}) = \frac{1}{{\sqrt {2\pi } \sigma _{gi}^{}}}{e^{ - \frac{{{{({\bar Z} - \mu _{gi})}^2}}}{{2\sigma _{gi}^2}}}}, \quad -\infty < \bar Z < +\infty.
\end{align}

{Then, the BER is equivalent to the difference of two normal random variables, which also follows normal distribution, being less than 0. Thus we have
\begin{align} \label{g13}
\widetilde{P}_{se}^{CG} &=\text{Pr}(\bar Z|\bar H_{0}- \bar Z|\bar H_{1}<0)\\
&=\frac{1}{2}{\sf erfc} \left(\frac{{\mu _{g0}^{} - \mu _{g1}^{}}}{{\sqrt {2(\sigma _{g1}^2 + \sigma _{g0}^2)} }}\right),\label{g1113}
\end{align}
where ${\sf erfc}(x)= 1 - {\sf erf}(x)$, ${\sf erf}(x) =\frac{2}{\sqrt{\pi} }\int_0^x {{e^{ - {t^2}}}} dt$.}

The case of $\sigma _{0}^2 < \sigma _{1}^2$ is similar, and combining two cases together yields
\begin{align}\label{g10009}
\widetilde{P}_{se}^{CG} &= \frac{1}{2}{\sf erfc} \left(\frac{|{\mu _{g1}^{} - \mu _{g0}^{}}|}{{\sqrt {2(\sigma _{g1}^2 + \sigma _{g0}^2)} }}\right).
\end{align}

Finally, substituting the appropriate system parameters into (\ref{g10009}) yields the desired results.
%:
%\begin{align}
%\widetilde{P}_{se}^{CG} &= \frac{1}{2} {\sf erfc} \left(\frac{{\sqrt N ||{h_1}{|^2} - |{h_0}{|^2}| }}{{\sqrt 2 \sqrt {{{(|{h_0}{|^2} + \frac{1}{\gamma })}^2} + {{(|{h_1}{|^2} + \frac{1}{\gamma })}^2}} }}\right),
%\end{align}
%where $\gamma \triangleq \frac{{{P_s}}}{{{N_w}}}$ denotes the signal to noise ratio (SNR).

\section{Proof of Theorem 3 }\label{ap3}
With unknown deterministic signal, $s_i^R[n]$ and $s_i^I[n]$ are no longer Gaussian variables, instead they are constants satisfying
\begin{align}
s_i^R[n]^2+s_i^I[n]^2=|h_i|^2P_s = s_i^2.
\end{align}

In addition, $(s_i^R[n]+{w^R}[n])$ and $(s_i^I[n]+{w^I}[n])$ are Gaussian variables with means $s_i^R[n]$ and $s_i^I[n]$, respectively. Also, the variance of these two Gaussian variables is the same and given by $\sigma^2=\frac{N_w}{2}$. Therefore, $\bar Z$ follows the non-central chi-squared distribution with $2N$ DOF and non-central parameter $\lambda =\sum\limits_{n = 0}^{N-1}s_i^R[n]^2+s_i^I[n]^2=N|h_i|^2P_s$, i.e., $\bar Z \sim \chi_{2N}^{'2}(N|h_i|^2P_s)$.

Due to the complex PDF expression of non-central chi-square distribution, it is difficult to obtain closed-form BER expression. As such, we look into the asymptotic large $N$ regime. When $N$ is sufficiently large, the central limit theorem can be invoked. Hence, $\bar Z$ can be modeled by the normal distribution, with mean $\mu_{pi}$ and variance $\sigma_{pi}^2$ \cite{7}, where
\begin{align} \label{g32}
\mu_{pi}&=N(2\sigma^2+s_i^2)=N(|h_i|^2P_s+N_w),\nonumber\\
\sigma_{pi}^2&=N(4\sigma^4+4\sigma^2s_i^2)=N(2|h_i|^2P_sN_w+N_w^2).
\end{align}

To this end, following the same lines as in the proof of Theorem 2, the BER performance with deterministic signal can be obtained as
\begin{align}
\widetilde{P}_{se}^{UD}&= \frac{1}{2}{\sf erfc}\left(\frac{\sqrt N \left||h_1|^2 - |h_0|^2\right| }{2 \sqrt {\frac{|h_0|^2 +|h_1|^2}{\gamma}+ \frac{1}{\gamma^2 } }}\right).
\end{align}

\section{Proof of Theorem 4}\label{ap4}
Assuming the probabilities of $H_1$ and $H_0$ are $q$ and $1-q$, respectively. An incorrect detection takes place when {one of} the following two events occurs:
\begin{itemize}
\item When $d_k=1$, the reader decides $(\hat Z_{k-1}^a - \hat Z_{k-1}^b)(\hat Z_{k}^a - \hat Z_{k}^b) > 0$;
\item  When $d_k=0$, the reader decides $(\hat Z_{k-1}^a - \hat Z_{k-1}^b)(\hat Z_{k}^a - \hat Z_{k}^b) \leq 0$;
\end{itemize}

Hence, the BER can be computed by
\begin{align}\label{g25}
P_{no}^{CG} =& q\text{Pr}\left((\hat Z_{k-1}^a - \hat Z_{k-1}^b)(\hat Z_{k}^a - \hat Z_{k}^b) > 0|{H_1}\right)+\nonumber\\
&(1-q)\text{Pr}\left((\hat Z_{k-1}^a - \hat Z_{k-1}^b)(\hat Z_{k}^a - \hat Z_{k}^b) \leq 0|{H_0}\right),
\end{align}
where
\begin{multline}\label{g26}
\text{Pr}\left((\hat Z_{k-1}^a - \hat Z_{k-1}^b)(\hat Z_{k}^a - \hat Z_{k}^b) > 0|{H_1}\right) \\
=\text{Pr}(\hat Z_{k-1}^1 - \hat Z_{k-1}^0> 0)\text{Pr}(\hat Z_{k}^0 - \hat Z_{k}^1 > 0)+ \\
\text{Pr}(\hat Z_{k-1}^1 - \hat Z_{k-1}^0 < 0)\text{Pr}(\hat Z_{k}^0 - \hat Z_{k}^1) < 0),
\end{multline}
and
\begin{multline}\label{g27}
\text{Pr}\left((\hat Z_{k-1}^a - \hat Z_{k-1}^b)(\hat Z_{k}^a - \hat Z_{k}^b)\leqslant 0|{H_0}\right)  \\
=\text{Pr}(\hat Z_{k-1
}^1 - \hat Z_{k-1}^0 > 0)\text{Pr}(\hat Z_{k}^1 - \hat Z_{k}^0 \leqslant 0)+\\
\text{Pr}(\hat Z_{k-1}^1 - \hat Z_{k-1}^0\leqslant 0)\text{Pr}(\hat Z_{k}^1 - \hat Z_{k}^0 > 0).
\end{multline}

To proceed, we make the critical observation that $\hat Z_{k}^j$ has the same distribution as $\bar Z_{k}^j$. Hence, if $\sigma_0^2>\sigma_1^2$, according to (\ref{g13}), we have
\begin{align} \label{g28}
\text{Pr}(\hat Z_{{k}}^0 < \hat Z_{{k}}^1)=P_{se}^{CG}, \mbox{ and } \text{Pr}(\hat Z_{{k}}^0 \geqslant \hat Z_{{k}}^1)=1-P_{se}^{CG}.
\end{align}

To this end, combining (\ref{g25}), (\ref{g26}), (\ref{g27}) and (\ref{g28}) together, the BER of the NoCoMC detector can be computed by
\begin{align}
P_{no}^{CG}=&q\Big[(1-P_{se}^{CG})P_{se}^{CG}+P_{se}^{CG}(1-P_{se}^{CG})+\notag\\
&\quad P_{se}^{CG}(1-P_{se}^{CG})+(1-P_{se}^{CG})P_{se}^{CG}\Big]+\notag\\
&(1-q)\Big[(1-P_{se}^{CG})P_{se}^{CG}+P_{se}^{CG}(1-P_{se}^{CG})+\notag\\
&\qquad \qquad P_{se}^{CG}(1-P_{se}^{CG})+(1-P_{se}^{CG})P_{se}^{CG}\Big]\notag\\
=&2P_{se}^{CG}(1-P_{se}^{CG}).
\end{align}

For the case $\sigma_0^2<\sigma_1^2$, the same conclusion can be drawn.

\nocite{*}

\bibliographystyle{IEEE}
\begin{footnotesize}

\end{footnotesize}

\end{document}